% This is based on the LLNCS.DEM the demonstration file of
% the LaTeX macro package from Springer-Verlag
% for Lecture Notes in Computer Science,
% version 2.4 for LaTeX2e as of 16. April 2010
%
% See http://www.springer.com/computer/lncs/lncs+authors?SGWID=0-40209-0-0-0
% for the full guidelines.
%
\documentclass[graybox]{svmult}
\usepackage{natbib}

\usepackage{mathtools}

%\usepackage[numbers,sort&compress]{natbib}

% choose options for [] as required from the list
% in the Reference Guide

\usepackage{type1cm}        % activate if the above 3 fonts are
                            % not available on your system
%
\usepackage{makeidx}         % allows index generation
\usepackage{graphicx}        % standard LaTeX graphics tool
                             % when including figure files
\usepackage{multicol}        % used for the two-column index
\usepackage[bottom]{footmisc}% places footnotes at page bottom

\usepackage{newtxtext}       % 
\usepackage{newtxmath}       % selects Times Roman as basic font

% see the list of further useful packages
% in the Reference Guide

\makeindex             % used for the subject index
                       % please use the style svind.ist with
                       % your makeindex program
                       
%\documentclass{llncs}
\usepackage{amssymb}
\usepackage{amsmath}
\usepackage{multirow}
\usepackage{subfig}
\usepackage{graphicx}
\graphicspath{ {images/} }

\newtheorem{remark}{Remark}
\newcounter{model_count}  
\begin{document}
%\title{Searching for Maximum Quasi-Bicliques with Mixed Integer Programming}
\title*{Mixed Integer Programming for Searching Maximum  Quasi-Bicliques}
%
% \titlerunning{Hamiltonian Mechanics}  % abbreviated title (for running head)
%                                     also used for the TOC unless
%                                     \toctitle is used
%
\author{Dmitry I. Ignatov 
\and 
Polina Ivanova \and  Albina Zamaletdinova  % ORCID
%\and \\ {Oleg Prokopyev}\inst{2,1}\\
%(0000-0003-2888-8630) % ORCID
%\email{droleg@pitt.edu}
}
\authorrunning{Ignatov et al.} % abbreviated author list (for running head)
%
%%%% list of authors for the TOC (use if author list has to be modified)
\tocauthor{Dmitry I. Ignatov}
\institute{
%Federal State Autonomous Educational Institution of Higher Education National Research University 
Dmitry I. Ignatov \at National Research University Higher School of Economics, Moscow and St. Petersburg Department of Steklov Mathematical Institute of Russian Academy of Sciences, Russia, \email{dignatov@hse.ru} (0000-0002-6584-8534) % ORCID
\and
Polina Ivanova  \at National Research University Higher School of Economics, Moscow, Russia, \email{ivanova.p.m@gmail.com} (0000-0001-6010-7991) % ORCID
\and
%University of Pittsburgh, USA
Albina Zamaletdinova \at National Research University Higher School of Economics, Moscow, Russia, \email{aazamaletdinova\_1@edu.hse.ru}
}

\maketitle % typeset the title of the contribution

\abstract*{
This paper is related to the problem of finding the maximal quasi-bicliques in a bipartite graph (bigraph). A quasi-biclique in the bigraph is its ``almost'' complete subgraph. The relaxation of completeness can be understood variously; here, we assume that the subgraph is a $\gamma$-quasi-biclique if it lacks a certain number of edges to form a biclique such that  its density is at least $\gamma \in (0,1]$. For a bigraph and fixed $\gamma$, the problem of searching for the maximal quasi-biclique consists of finding a subset of vertices of the bigraph such that the induced  subgraph is a quasi-biclique and its size is maximal for a given graph.
Several models based on Mixed Integer Programming (MIP) to search for a quasi-biclique are proposed and tested for working efficiency.
An alternative model inspired by biclustering is formulated and tested; this model simultaneously maximizes both the size of the quasi-biclique and its density, using the least-square criterion similar to the one exploited by  triclustering \textsc{TriBox}.
%\keywords{quasi-biclique, maximal quasi-biclique, mixed integer programming, biclustering, triclustering}
}

\abstract{
This paper is related to the problem of finding the maximal quasi-bicliques in a bipartite graph (bigraph). A quasi-biclique in the bigraph is its ``almost'' complete subgraph. The relaxation of completeness can be understood variously; here, we assume that the subgraph is a $\gamma$-quasi-biclique if it lacks a certain number of edges to form a biclique such that  its density is at least $\gamma \in (0,1]$. For a bigraph and fixed $\gamma$, the problem of searching for the maximal quasi-biclique consists of finding a subset of vertices of the bigraph such that the induced  subgraph is a quasi-biclique and its size is maximal for a given graph.
Several models based on Mixed Integer Programming (MIP) to search for a quasi-biclique are proposed and tested for working efficiency.
An alternative model inspired by biclustering is formulated and tested; this model simultaneously maximizes both the size of the quasi-biclique and its density, using the least-square criterion similar to the one exploited by  triclustering \textsc{TriBox}.
%\keywords{quasi-biclique, maximal quasi-biclique, mixed integer programming, biclustering, triclustering}
}

\section{Introduction}\label{sec:intro}
 There are many data sources that can be represented as a bipartite graph; for example, in recommender systems and web stores, users can interact with different items like movies, books, clothes, and other products. The most commonly studied data usually has a structure of a bipartite graph whose vertices form two disjoint sets. For example, social network data, where a binary relation between two sets show interactions between people and communities, advertisement data with a set of consumers and a corresponding set of products and so on.

In this study, we are interested in the analysis of such bipartite data and search for dense communities, where almost all elements are connected. A situation where all elements of a community are involved can be described by a concept of a biclique or a complete subgraph of a bipartite graph.

Unfortunately, the community completeness requirement excludes almost complete communities frequently met in real-world data. Due to this reason, we allow some edges to be absent and introduce the concept of quasi-biclique. In order to bound the size of quasi-biclique, we can use the subgraph minimal density or the maximum number of absent edges needed to complete a subgraph.

The problem of searching for maximal quasi-clique is NP-hard~\citep{8_on_maximum_quasi_clique} as well as the problem of searching for maximal quasi-biclique~\citep{1_quasi_complexity}; the maximum edge biclique problem is known to be NP-complete~\citep{Peeters:2003}. Many algorithms that solve those problems are being developed~\citep{Wang:2013,5_massive_quasi_cliques,3_financial_ratios,6_protein_protein}. For instance,  \cite{exact_mip_veremryev} offered an exact Mixed Integer Programming model for searching for maximal quasi-clique but the case of bipartite graphs for quasi-bicliques was not yet considered within the MIP framework.

The aim of this paper is to propose a Mixed Integer Programming models for finding a maximum quasi-bicliques in a bipartite graph and compare the results obtained by those models with those of existing algorithms.

The paper is organised as follows. Section~\ref{sec:relwork} introduces several basic definitions, namely biclique, quasi-bicliques, and its density and provides a short overview of related work along with important propositions on algorithmic aspects. Section~\ref{sec:models} proposes two Mixed Integer Programming models for quasi-biclique search. In Section~\ref{sec:data}, the chosen datasets are described. Section~\ref{sec:experiments} summarises the experimental results. Section~\ref{sec:results_concl} concludes the paper.

\section{Maximum quasi-cliques and quasi-bicliques}\label{sec:relwork}

\subsection{Basic definitions}\label{ssec:def}
Let us introduce several basic notions.
\begin{definition} In a graph $G=(V, E)$ a subgraph $G'=(V', E')$, where $V' \subseteq V$, $E' \subseteq E$, is called a vertex-induced subgraph. Let us denote such graph as $G[V']$.
\end{definition} 

\begin{definition} A complete subgraph of a graph $(V, E)$ is called a clique.
\end{definition}
\begin{definition} A complete bipartite subgraph in a bipartite graph $(U, V, E \subseteq U \times V)$ is called a biclique.
\end{definition}

\begin{definition} The density of an arbitrary graph is the ratio of the number of edges to the maximum possible number of edges.
\end{definition}

The density of a bipartite graph $G=(V,U,E)$ is $\rho = \frac{|E|}{|V||U|}$.

\begin{definition} A subgraph $G'=(V', E')$ of a given graph $G=(V,E)$ is called $f(k)$-dense, if $G'$ is a subgraph induced by a vertex subset $V' \subseteq V$, $|V'| = k$ and $|E'| \geq f(k)$, where $f: Z_{+} \rightarrow R_+$ is a chosen function.
\end{definition}

\begin{definition} A $\gamma$-quasi-biclique in a bipartite graph $G=(U, V, E)$ is its bipartite induced subgraph $G'=(V', U', E'\subseteq V' \times U')$ with the density at least $\gamma \in (0,1]$.
\end{definition}

\subsection{Maximum quasi-cliques}
\par Let us consider properties and searching algorithms of cliques in a graph $G=(V,E)$.
\par For a graph $G=(V,E)$ and a fixed $\gamma \in (0,1]$ we need to find a $V'\subseteq V$ such that $G[V']$ is a $\gamma$-quasi-clique and $|V'|$ is maximal.
\par Problem of searching for maximum quasi-clique as well as the problem of searching for maximum clique is NP-hard \citep{1_quasi_complexity},\citep{8_on_maximum_quasi_clique}. In addition to that, the assumption of graph incompleteness leads to the loss of useful properties of a clique. For instance, inheritance property which is used in most maximum clique searching algorithms does not hold. Namely, if $G[V]$ is a clique, then $G[V']$ is a clique as well, where $V'$ is a subset of $V$. This property does not hold for $\gamma$-quasi-cliques: i.e. a subset of a $\gamma$-quasi-clique is not necessarily a $\gamma$-quasi-clique.
\par However, for quasi-cliques we can define the property of quasi-inheritance: $\gamma$-quasi-clique with $|V|>1$ is a strict superset to a $\gamma$-quasi-clique with $|V|-1$ vertices \citep{8_on_maximum_quasi_clique}. 

\subsection{Maximum quasi-bicliques}
\par The problem of maximum quasi-biclique in a bipartite graph $G=(U,~V,~E)$ with fixed $\gamma \in (0,1]$ is to find $U' \subseteq U$ and $V' \subseteq V$ such that vertex-induced subgraph $G[U',V']$ is a $\gamma$-quasi-biclique of size $|U'| + |V'|$, maximum for this graph. Let us denote a maximum $\gamma$-quasi-biclique in the graph $G$ by $\omega_{\gamma}\left(G\right)$
\par Let us consider several commonly met definitions of quasi-biclique. In \cite{1_quasi_complexity}, we can find the following definition.
\begin{definition}\label{def:delta_bicluque} 
A induced subgraph $G'[U', V']$ is called a $\delta$-quasi-biclique ($0\leq \delta\leq 0.5$) in a bipartite graph $G=(U, V, E)$ if:
\begin{enumerate}
\item $\forall u \in U': d\left(u, V'\right) = \Large{\vert}\{ v \in V' \vert (u, v) \in E \}\Large{\vert}\geq \left(1-\delta\right) \cdot \vert V' \vert$,
\item $\forall v \in V': d\left(v, U'\right) = \Large{\vert}\{ u \in U' \vert (u, v) \in E \}\Large{\vert} \geq \left(1-\delta\right) \cdot \vert U' \vert$ .
\end{enumerate} 
\end{definition}

\par In order to consider the third definition of quasi-biclique~\cite{3_financial_ratios}, let us introduce some useful notations. The neighbourhood of a vertex $v\in V$ in a graph $G=(V,E)$ is a set of vertices $\Gamma(v)=\{u\in V  | (u,v) \in E \}.$

\par For a vertex set $V' \subseteq V$ and a vertex $v\in V\setminus V'$ let us denote a set of vertices from $V'$ adjacent to $v$ as $\Gamma_{V'}(v)=\{u|(u,v)\in E\&u \in V'\}$. By a set of vertices $\Gamma(V')={\cup_{v\in V'}\Gamma(v)}$ we denote a loose neighbourhood of subset $V'$

\begin{definition} \label{def:epsilon_bicluque}
A subgraph $G'[U', V']$ of a bipartite graph $G(U, V, E)$ is called an $\epsilon$-quasi-biclique, if for some small positive integer $\epsilon$:
\begin{enumerate}
\item $\forall u \in U'  \vert V' \vert - \vert \Gamma_{V'}(u) \vert \leq \varepsilon$,
\item $\forall v \in V'  \vert U' \vert - \vert \Gamma_{U'}(u) \vert \leq \varepsilon$.
\end{enumerate} 
\end{definition}

\begin{remark}
Obviously, Definitions \ref{def:delta_bicluque} and \ref{def:epsilon_bicluque} of quasi-bicliques can be reduced to the definition of $\gamma$-quasi-biclique.
\begin{enumerate}
\item In Definition~\ref{def:delta_bicluque}, let us sum the first condition over all vertices from $U'$. We get, that $\displaystyle \sum_{u \in U'} d\left(u, V'\right) \geq \left(1-\delta\right) \cdot \vert V' \vert \vert U' \vert$, where $\displaystyle\sum_{u \in U'} d\left(u, V'\right)$ is a number of edges in a $\delta$-quasi-biclique, $\vert V' \vert \vert U' \vert$ is the maximum possible number of edges in a bipartite graph. Thus a $\delta$-quasi-biclique is a $\gamma$-quasi-biclique with $\gamma = 1 -\delta$. Both definitions of quasi-biclique are equivalent if $\gamma \in [0.5, 1]$.

\item By summing both conditions over sets $U'$ and $V'$, respectively, in Definition \ref{def:epsilon_bicluque} we get:
\begin{equation*}
\dfrac{\sum_{u \in U'}\Gamma_{V'}(u)}{\vert U' \vert \vert V' \vert} \geq 1 - \dfrac{\varepsilon}{\vert V' \vert}, \ \dfrac{\sum_{v \in V'}\Gamma_{U'}(v)}{\vert U' \vert \vert V' \vert} \geq 1 - \dfrac{\varepsilon}{\vert U' \vert}. 
\end{equation*} 
Since $\displaystyle \sum_{u \in U'}\Gamma_{V'}(u) = \sum_{v \in V'}\Gamma_{U'}(v)$ is a number of edges in an $\varepsilon$-quasi-biclique $G[U',V']$, then the density of $G[U',V']$ is:
\begin{equation*}
\rho(G[U',V']) \geq 1 - \dfrac{\varepsilon }{\min(\vert U' \vert, \vert V' \vert)}.
\end{equation*}
Bounding the size of a quasi-clique vertex sets from below $\omega_l^{(1)} \leq \vert U' \vert $ and $\omega_l^{(2)} \leq \vert V' \vert $, we can establish  a connection between these definitions. If we let 
\begin{equation*}
 \gamma= 1 -  \dfrac{\varepsilon}{\min(\omega_l^{(1)}, \omega_l^{(2)})},
\end{equation*}
we obtain that $G[U',V']$ is a  $\gamma$-quasi-clique under condition $\varepsilon \in [0,\min(\omega_l^{(1)}, \omega_l^{(2)}))$.
\end{enumerate}
\end{remark}
\par Most properties of quasi-cliques naturally fulfils for quasi-bicliques as well. However, since the density definition of quasi-biclique differs from the case of quasi-clique and the maximum number of edges is a function of two variables with no convex properties, most algorithms searching for maximum quasi-clique are not directly applicable to search for maximum quasi-biclique.
\par \cite{8_on_maximum_quasi_clique} established inequalities for upper bounds for the size of maximum quasi-clique shown below.

\begin{proposition}\label{prop:quasi_clique_up}
In a graph $G=(V, E)$ with $\vert V \vert = n$ and $\vert E \vert = m$ the maximum size of a quasi-clique $\omega_{\gamma}(G)$ satisfies the following inequality:
\begin{equation}
\omega_{\gamma}(G) \leq \dfrac{\gamma + \sqrt{\gamma + 8\gamma m}}{2\gamma}.
\end{equation}
\end{proposition}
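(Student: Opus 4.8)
The plan is to convert the density constraint into a bound on the vertex count $k = \omega_\gamma(G)$ through a single counting inequality, and then solve the resulting quadratic. First I would fix a maximum $\gamma$-quasi-clique, i.e. a vertex subset $V' \subseteq V$ with $|V'| = k$ such that $G[V']$ has density at least $\gamma$ and $k$ is as large as possible. By the definition of density, the number of edges $|E'|$ of $G[V']$ satisfies $|E'| \geq \gamma \binom{k}{2} = \gamma k(k-1)/2$, since $\binom{k}{2}$ is the maximum possible number of edges on $k$ vertices.

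The second, and genuinely the only substantive, step is to observe that $G[V']$ is a subgraph of $G$, so $E' \subseteq E$ and hence $|E'| \leq m$. Chaining the two inequalities gives $\gamma k(k-1)/2 \leq m$, equivalently the quadratic inequality
\[
\gamma k^2 - \gamma k - 2m \leq 0.
\]
Because $\gamma > 0$, the left-hand side is an upward-opening parabola in $k$, so $k$ cannot exceed the larger root of $\gamma k^2 - \gamma k - 2m = 0$. Applying the quadratic formula, with discriminant $\gamma^2 + 8\gamma m$, yields
\[
k \leq \frac{\gamma + \sqrt{\gamma^2 + 8\gamma m}}{2\gamma},
\]
which is the asserted bound.

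I do not anticipate a real obstacle: the argument reduces to a one-line counting step followed by routine algebra. The only point demanding care is keeping straight that the density normalizes against the complete-graph edge count $\binom{k}{2}$, whereas the upper bound on $|E'|$ comes from the global edge count $m$; conflating these two quantities would break the chain of inequalities. I would also remark that this derivation produces $\sqrt{\gamma^2 + 8\gamma m}$ under the radical, so the bare $\gamma$ (rather than $\gamma^2$) appearing in the displayed discriminant of the proposition appears to be a typographical slip that the proof silently corrects.
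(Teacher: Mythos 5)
Your proof is correct. In fact, the paper never proves this proposition itself---it is quoted from \cite{8_on_maximum_quasi_clique}---but your argument is the standard one, and it is exactly the edge-counting scheme the paper does use for its biclique analogue (Proposition~\ref{prop:quasi_biclique_up}): bound the number of edges of the quasi-clique from below by $\gamma$ times the maximum possible number of edges and from above by $m$, then solve the resulting quadratic in $k$. Your closing remark about the radical is also right: chaining $\gamma k(k-1)/2 \leq |E'| \leq m$ yields $k \leq \bigl(\gamma + \sqrt{\gamma^2 + 8\gamma m}\bigr)/(2\gamma)$, so the discriminant should read $\gamma^2 + 8\gamma m$; note, however, that the bound as printed remains valid (merely weaker), since $\gamma \in (0,1]$ implies $\gamma \geq \gamma^2$ and hence $\sqrt{\gamma + 8\gamma m} \geq \sqrt{\gamma^2 + 8\gamma m}$.
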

\par In order to obtained similar bound for a quasi-biclique, we need to allow the following conditions on quasi-biclique.

\begin{proposition}\label{prop:quasi_biclique_up}
In a bipartite graph $G(U, V, E)$, with $\vert U \vert = n_U$, $\vert V \vert = n_V$ and $\vert E \vert = m$, the maximum size of a quasi-biclique $\omega_{\gamma}(G)$ satisfies the following inequalities:
\begin{enumerate}
\item $\omega_{\gamma}(G) \leq \sqrt{\dfrac{4m}{\gamma}}$, for balanced quasi-biclique (the sizes of two vertex sets $U$ and $V$ are equal),
\item $\omega_{\gamma}(G) \leq \min \left\lbrace (2 + \theta)\cdot\sqrt{\dfrac{m}{\gamma (1 - \theta)}}, \ \left(1 + \dfrac{1}{1 - \theta}\right)\cdot \sqrt{\dfrac{m (1 + \theta)}{\gamma}} \right\rbrace$, if $\theta \in (0,1)$ and sizes of vertex sets differ from each other by no more than in $\theta$.
\end{enumerate}
\end{proposition}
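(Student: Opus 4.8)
The plan is to reduce everything to the single edge-counting inequality that defines a $\gamma$-quasi-biclique and then translate the resulting bound on the \emph{product} $|U'|\,|V'|$ into a bound on the \emph{size} $|U'| + |V'|$. Suppose $G[U',V']$ is a $\gamma$-quasi-biclique. By definition its density is at least $\gamma$, so the number $|E'|$ of its edges satisfies $|E'| \ge \gamma\,|U'|\,|V'|$. On the other hand $E' \subseteq E$, hence $|E'| \le m$. Combining the two yields the master inequality
\begin{equation}
|U'|\,|V'| \le \frac{m}{\gamma}. \label{eq:master}
\end{equation}
This is the exact analogue of the step $\gamma\binom{k}{2}\le m$ behind Proposition~\ref{prop:quasi_clique_up}; the only structural difference is that the maximum edge count is now the product $|U'|\,|V'|$ rather than a function of one variable, which is precisely the non-convexity issue flagged in the text.

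First I would settle the balanced case, where the two parts of the quasi-biclique have equal size, $|U'| = |V'| =: k$. Then \eqref{eq:master} reads $k^2 \le m/\gamma$, so $k \le \sqrt{m/\gamma}$, and the size is $|U'| + |V'| = 2k \le 2\sqrt{m/\gamma} = \sqrt{4m/\gamma}$, which is the first inequality.

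For the general case I would write $a = \min(|U'|,|V'|)$ and $b = \max(|U'|,|V'|)$, so the size is $a+b$ and \eqref{eq:master} becomes $ab \le m/\gamma$. The assumption that the two parts differ by at most $\theta$ is a bound on the ratio $b/a$ (naturally, $a \ge (1-\theta)b$), and this is exactly what converts a product bound into a sum bound: without it $a+b$ could be made arbitrarily large by taking $b$ huge and $a=1$. I would then bound $a+b$ in two complementary ways and take the smaller. \emph{Controlling the larger part:} substituting $a \ge (1-\theta)b$ into $ab\le m/\gamma$ isolates $b \le \sqrt{m/(\gamma(1-\theta))}$, after which $a+b$ is at most a constant multiple of $b$; this yields the estimate carrying the factor $\sqrt{m/(\gamma(1-\theta))}$. \emph{Controlling the smaller part:} since $a \le b$, \eqref{eq:master} gives $a^2 \le m/\gamma$ directly, and expressing $b$ through $a$ via the ratio constraint bounds $a+b$ by a constant multiple of $a$; this yields the estimate carrying the factor $\sqrt{m(1+\theta)/\gamma}$. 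Both being valid upper bounds, so is their minimum, which is the second assertion.

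The genuinely delicate point is not the structure of the argument but the bookkeeping of the constants $(2+\theta)$ and $\bigl(1+\tfrac{1}{1-\theta}\bigr)$ and of the $1\pm\theta$ factors under the roots. These depend on the precise formalisation of ``the sizes differ by no more than $\theta$'' (difference measured against the larger part, the smaller part, or the total) and on whether one bounds $a+b$ crudely by $2b$, by $\bigl(1+\tfrac{1}{1-\theta}\bigr)a$, or sharply by optimising $t \mapsto \sqrt{t}+1/\sqrt{t}$ over the admissible ratio $t=b/a$. I expect this constant-chasing — and in particular checking that the chosen formalisation reproduces exactly the stated estimates rather than slightly tighter variants (the tight versions appear to have coefficient $2$ and root $\sqrt{m/\gamma}$ in the respective cases) — to be the main obstacle, whereas the edge-counting core \eqref{eq:master} and the balanced case are immediate.
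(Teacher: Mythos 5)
Your proposal is correct and follows essentially the same route as the paper's proof: the edge-counting master inequality $\gamma\,|U'|\,|V'| \le m$ combined with the ratio constraint to turn the product bound into a sum bound, with the balanced case handled identically. The one point you left open (the constant bookkeeping) is resolved in the paper by taking the asymmetric formalization $(1-\theta)\,n_{V'} \le n_{U'} \le (1+\theta)\,n_{V'}$, from which the stated constants $(2+\theta)$ and $\left(1+\tfrac{1}{1-\theta}\right)$ and the $(1\pm\theta)$ factors under the roots fall out directly; your min/max normalization $a \ge (1-\theta)b$ is implied by that condition (since $\tfrac{1}{1+\theta} \ge 1-\theta$), so the slightly tighter bounds you predicted are valid and imply the stated ones a fortiori.
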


\begin{proof}Let $U'$ and $V'$ be vertex sets of a maximum $\gamma$-quasi-biclique and let $n_{U'}$ and $n_{V'}$ be their cardinalities, respectively. 
\begin{enumerate}
\item For balanced quasi-clique $n_{U'} = n_{V'}$, hence $\omega_{\gamma}(G) = 2\cdot n_{U'}$. Obviously, that the maximum possible number of edges in a quasi-biclique in less than the total number of graph edges. Then
\begin{equation*}
\gamma \cdot n_{U'}^2 = \gamma \cdot \left(\dfrac{\omega_{\gamma}(G)}{2}\right)^2 \leq m, 
\end{equation*}

\item $\gamma$-quasi-biclique is ``almost'' balanced when $(1-\theta)~n_{V'}~\leq n_{U'}~\leq(1+\theta)~n_{V'}$. Thus,
\begin{gather*}
\omega_{\gamma}(G) = n_{U'} + n_{V'} \leq (2 + \theta) n_{V'} \Rightarrow  \\
m \geq \gamma \cdot n_{U'} \cdot n_{V'} \geq \gamma \cdot (1 - \theta) \cdot n_{V'}^2 \geq \gamma \cdot (1 - \theta) \cdot \left(\dfrac{\omega_{\gamma}(G)}{2 + \theta} \right)^2  \Rightarrow \\
\Rightarrow \omega_{\gamma}(G) \leq \sqrt{\dfrac{m (2 + \theta)^2}{\gamma (1 - \theta)}}.
\end{gather*}

Analogously, 
\begin{gather*}
\omega_{\gamma}(G) \leq \left(1 + \frac{1}{1 - \theta}\right) n_{U'},\  \gamma \cdot n_{U'} \cdot n_{V'} \geq \gamma \cdot \frac{1}{(1 + \theta)} n_{U'}^2  \Rightarrow \\
\Rightarrow \omega_{\gamma}(G) \leq \left(1 + \dfrac{1}{1 - \theta} \right) \sqrt{\dfrac{m (1 + \theta)}{\gamma}}.
\end{gather*}
%\geq \gamma (1 - \theta)\left(\dfrac{\omega_{\gamma}(G)}{2 + \theta} \right) 
\end{enumerate}
\end{proof}
\par Now let us discuss a few chosen algorithms that implement maximum quasi-biclique search.
\par A greedy algorithm for searching maximum quasi-bicliques according to Definition \ref{def:delta_bicluque} is discussed in detail by \cite{1_quasi_complexity}. The algorithm uses two parameters: 1) $\delta$ to control the size of the quasi-biclique ($\delta = 1-\gamma$) and 2) $\tau$ to control the smallest possible number of vertices that belong to one of the partitions of a quasi-biclique. Let us denote by $U'$ and $V'$ vertex sets of quasi-biclique of the graph $G(U, V, E)$. At the beginning of the algorithm we set $U' = \emptyset$ and $V' = V$. From the vertex set $U\setminus U'$ we choose such vertex $u$ that its degree is maximum and delete from $V$' all vertices for which $d(v, U') < (1-\delta)\cdot |U'|$. This process continues as long as the size of $U'<\tau$. However, this algorithm can miss possible vertex candidates, thus authors introduce the second step of the algorithm: if there is a vertex $u$ outside of the current vertex set $U'$ such that its degree is maximum in $U\setminus U'$ and $U'\cup\{u\}$  remains a quasi-biclique, then it can be added to $U'$. The same applies to $V'$ as long as there is a vertex to add.

\section{Quasi-biclique searching models}\label{sec:models}
\subsection{Model 1}
\par In this section we will show how to adapt the model {\bf F3} from \cite{exact_mip_veremryev} for searching for maximum quasi-bicliques. Let us consider disjoint sets $U'\cup V'$, $U'\cap V'$ = $\emptyset$ that form a quasi-biclique of a bipartite graph $G=(U,V,E)$. Using similar techniques as in \cite{exact_mip_veremryev}, we introduce the following variables:
\begin{gather*}
 u_i=1 \Leftrightarrow i \in U', \\
 v_j=1 \Leftrightarrow j \in V', \\
 y_{ij} = 1 \Leftrightarrow \exists (i, j) \in E \cap \left( U' \times V' \right) \\
 z_k^{(1)} = 1 \Leftrightarrow |U'| = k, z_k^{(2)} = \vert V'\vert,\\
  \omega_l^{(1)}, \omega_u^{(1)} \mbox{are the lower and upper bounds, respectively, for the vertex set } U',\\
  \omega_l^{(2)}, \omega_u^{(2)} \mbox{are the lower and upper bounds, respectively, for the vertex set } V'.
\end{gather*} 
We can refine the sizes of vertex sets of a quasi-biclique using Proposition~\ref{prop:quasi_biclique_up}. Then we build Model 1:

\stepcounter{model_count}
\textbf{\large Model \arabic{model_count}}
\begin{gather}
 \omega_{\gamma}(G) = \max_{u,v,y,z} \left[\sum_{i \in U}{u_i} + \sum_{j \in V}{v_j} \right], \label{eq:model1_max}\\
 \mbox{under conditions} \sum_{(i,j) \in E}{y_{ij}} \geq \gamma \sum_{n = \omega_l^{(1)}}^{\omega_u^{(1)}}\sum_{m=\omega_l^{(2)}}^{\omega_u^{(2)}}{n\cdot m\cdot z_n^{(1)}\cdot z_m^{(2)}},\label{eq:model1_gamma_ineq}\\
 \forall i \in U, \forall j \in V: y_{ij} \leq u_i, y_{ij} \leq v_j, y_{ij} \geq v_i + v_j - 1, 
  \label{eq:model1_yij_restrict}\\
 \sum_{i \in U}{u_i} = \sum_{n=\omega_l^{(1)}}^{\omega_u^{(1)}}{n z_n^{(1)}}, \sum_{j \in V}{v_j} = \sum_{m=\omega^{(2)}_l}^{\omega_u^{(2)}}{m z_m^{(2)}}, \label{eq:model1_sums_vertex_and_zk}\\
 \sum_{n=\omega_l^{(1)}}^{\omega_u^{(1)}}{z_n^{(1)}}=1, \sum_{m=\omega^{(2)}_l}^{\omega_u^{(2)}}{z_m^{(2)}}=1, \label{eq:model1_sums_zk_is1}\\
 \forall i \in U, \forall j \in V: u_i \in \{0,1\}, v_j \in \{0,1\}, \forall i < j, (i, j) \in E : y_{ij} \in \{0,1\}, \  \label{eq:model1_restrictions_uvy}\\
\forall n \in \{\omega_l^{(1)}: z_n^{(1)} \geq 0 \ ,\ldots,\omega_u^{(1)}\}, \forall m \in \{ \omega_l^{(2)},\ldots,\omega_u^{(2)}\}: z_m^{(2)} \geq 0 . \label{eq:model1_restrictions_zk}
\end{gather}
\par As in the model {\bf F3} we can bound $z_k^{(1)}$ and $z_k^{(2)}$ and recast them from binary into continuous variables.
\par Suppose, that there exists an optimal solution $\left(u^*,~v^*,~y^*,~\overline{z^{(1)}},~\overline{z^{(2)}}\right)$ of Model 1, where vectors $\overline{z^{(1)}}$ and $\overline{z^{(2)}}$ are not binary \linebreak $\left(\overline{z_n^{(1)}} \geq 0, \ \overline{z_n^{(2)}} \geq 0\right)$. Let $\widehat{z^{(1)}}$ be a binary vector with $\widehat{z^{(1)}_k} = 1 \Leftrightarrow \vert U' \vert = k$ and  $\widehat{z^{(1)}_k} = 0$ otherwise, where $k \in \{\omega_l^{(1)}, \ldots, \omega_u^{(1)}\}$; analogously, vector $\widehat{z^{(2)}}$: $\displaystyle \widehat{z^{(2)}_k} = 1 \Leftrightarrow \vert V' \vert = k$  and $0$  otherwise. Hence, it is obvious, that vectors $\widehat{z^{(1)}}$ and $\widehat{z^{(2)}}$ satisfy constraint~\ref{eq:model1_sums_zk_is1}. Constraints \ref{eq:model1_gamma_ineq} and \ref{eq:model1_sums_vertex_and_zk} can be rewritten as follows:
\begin{gather*}
\sum_{i \in U}{u^*_i} = \sum_{n=\omega_l^{(1)}}^{\omega_u^{(1)}}{n \widehat{z_n^{(1)}}}, \sum_{j \in V}{v^*_j} = \sum_{m=\omega^{(2)}_l}^{\omega_u^{(2)}}{m \widehat{z_m^{(2)}}} \mbox{ (by definition),}\\
\sum_{(i,j) \in E}{y^*_{ij}} \geq \gamma \sum_{n = \omega_l^{(1)}}^{\omega_u^{(1)}}\sum_{m=\omega_l^{(2)}}^{\omega_u^{(2)}}{n\cdot m\cdot \overline{z_n^{(1)}} \cdot \overline{z_m^{(2)}}} = \\
= \gamma \left(\sum_{n = \omega_l^{(1)}}^{\omega_u^{(1)}}{n \cdot \overline{z_n^{(1)}} }\right) \left(\sum_{m=\omega_l^{(2)}}^{\omega_u^{(2)}}{ m \cdot \overline{z_m^{(2)}}}\right) = \gamma \left(\sum_{i \in U}{u^*_i} \right) \left(\sum_{m=\omega_l^{(2)}}^{\omega_u^{(2)}}{\sum_{j \in V}{v^*_j}}\right) \\
= \gamma \left(\sum_{n = \omega_l^{(1)}}^{\omega_u^{(1)}}{n \cdot \widehat{z_n^{(1)}} }\right) \left(\sum_{m=\omega_l^{(2)}}^{\omega_u^{(2)}}{ m \cdot \widehat{z_m^{(2)}}}\right).
\end{gather*}
\par This means that $\left(u^*, v^*, y^*,\widehat{z^{(1)}}, \widehat{z^{(2)}}\right)$ is also an optimal solution of the problem and usage of continuous variables $z^{(1)}_n$ and $z^{(2)}_m$ in Model 1 is proved.
\par In the worst case, when $\omega_l^{(1)} = \omega_l^{(2)} = 1$, $\omega_u^{(1)} = \vert U \vert$, $\omega_u^{(2)} = \vert V \vert$, the model has $\vert U \vert + \vert V \vert$ binary variables and $\vert E \vert + \vert U \vert + \vert V \vert$ continuous. 
\begin{remark} In Model 1, condition \ref{eq:model1_gamma_ineq} is not linear, so we can linearise it. Let us introduce a new variable $z_{n,m} = z_n^{(1)} \cdot z_m^{(2)}$. Then left side of the inequality \ref{eq:model1_gamma_ineq} is:
\begin{equation*}
\sum_{n=\omega_l^{(1)}}^{\omega_u^{(1)}} \sum_{m=\omega_l^{(2)}}^{\omega_u^{(2)}} (n\cdot m) \cdot z_{n,m}.
\end{equation*}
Conditions \ref{eq:model1_sums_vertex_and_zk} are changed as follows:
\begin{equation*}
\sum_{i \in U}{u_i} = \sum_{n=\omega_l^{(1)}}^{\omega_u^{(1)}} \sum_{m=\omega_l^{(2)}}^{\omega_u^{(2)}}{n z_{n,m}}, \sum_{j \in V}{v_j} = \sum_{n=\omega_l^{(1)}}^{\omega_u^{(1)}} \sum_{m=\omega_l^{(2)}}^{\omega_u^{(2)}}{m z_{n,m}},
\end{equation*}
where $c_{n,m}^{(1)} = n$ and $c_{n,m}^{(2)} = m$.
\par Using this substitution for variables $z_n^{(1)}$ and $z_m^{(2)}$, the model becomes a linear integer programming model. In the worst-case scenario,  for dense graph there are $\vert U \vert + \vert V \vert$ binary variables and $\vert E \vert + \vert U \vert \cdot \vert V \vert$ continuous variables to be optimized.
\end{remark}

\subsection{Model 2}
\par Let us look at different maximizing criteria for related Mixed Integer Programming models. In papers \citep{mirkin_tribox,Mirkin:2011} dedicated to triclustering generation, $K = (G, M, B, I)$ is a triadic context with $G$, the set of objects, $M$, the set of attributes, $B$, the set of conditions,  and $I \subseteq G \times M \times B$, the ternary relation. The proposed triclustering algorithm searches for clusters that maximize the following criteria:
\begin{equation}\label{eq:mirkin_criterion_3d}
	f_3(T) = \rho^2(T) |X| |Y| |Z|.
\end{equation}
\par By narrowing this criteria for binary contexts, it is possible to obtain another maximising criteria for Model 7 {\bf GF3($f$)} from \cite{exact_mip_veremryev}, p.191.
\par For a bipartite graph $G=(U, V, E)$ and its induced subgraph $G[C_1, C_2]$, function $f$ is maximized over the density and size of biclique.
\begin{equation}\label{eq:mirkin_criterion_2d}
f(C_1, C_2) = \rho^2(G[C_1, C_2]) \cdot|C_1|\cdot|C_2| = \dfrac{\left( \lvert \{ (i,j): i\in C_1, j \in C_2, (i,j) \in E\}\lvert\right)^2}{|C_1|\cdot|C_2|}.
\end{equation}
Using variables definitions from the previous model we can rewrite function $f$:
\begin{equation*} 
	f (C) = \dfrac{\left( \sum_{(i,j) \in E}{y_{ij}} \right)^2}{\left( \sum_{i \in U}{u_i} \right) \cdot \left(\sum_{j \in V}{v_j}\right)}
\end{equation*}
\par Since function $f$ is multiplicative, the direct way to transform it to an additive function is logarithmisation:
\begin{gather} 
	f_{log} (C) = 2\cdot \log{\lvert \{ (i,j): i\in C_1, j \in C_2, (i,j) \in E\}\lvert} - \log{\lvert C_1 \lvert} - \log{\lvert C_2 \lvert} =  \notag\\
	\label{eq:mirkin_criterion_2d_loglin} 2\cdot \log{\left(\sum_{(i,j) \in E}{y_{ij}}\right)} - \log{\left(\sum_{i \in U}{u_i}\right)} - \log{\left(\sum_{j \in V}{v_j}\right)}.
\end{gather}
As in Model 1,
\begin{equation*}
\sum_{i \in U}{u_i} = \sum_{n=\omega_l^{(1)}}^{\omega_u^{(1)}}{n z_n^{(1)}}, \sum_{j \in V}{v_j} = \sum_{m=\omega^{(2)}_l}^{\omega_u^{(2)}}{m z_m^{(2)}}.
\end{equation*}
\par Now we introduce a new variable: $w_k = 1 \Leftrightarrow \lvert \{ (i,j): i\in C_1, j \in C_2, (i,j) \in E\}\lvert=k$, then $\sum_{(i,j) \in E}{y_{ij}} = \sum_{(i,j) \in E}{k w_k}$.
\begin{gather}
	f_{log} (C) = 2\cdot \log{\left(\sum_{(i,j) \in E}{k w_k}\right)} - \log{\left(\sum_{n=\omega_l^{(1)}}^{\omega_u^{(1)}}{n z_n^{(1)}}\right)} - \log{\left(\sum_{m=\omega^{(2)}_l}^{\omega_u^{(2)}}{m z_m^{(2)}}\right)} = \notag\\= 2\cdot \sum_{(i,j) \in E}{\log(k) w_k} - \sum_{n=\omega_l^{(1)}}^{\omega_u^{(1)}}{\log{(n)} z_n^{(1)}} - \sum_{m=\omega^{(2)}_l}^{\omega_u^{(2)}}{\log{(m)} z_m^{(2)}}.\label{eq:mirkin_criterion_2d_lin}
\end{gather}
\par Obviously, that equality $\displaystyle \log{\left(\sum_{(i,j) \in E}{k w_k}\right)} = \sum_{(i,j) \in E}{\log(k) w_k}$ because $w_k$ is binary variable and $\displaystyle \sum_{(i,j) \in E}{w_k} = 1$. Thus there exists a unique number $k^*$ such that $w_{k^*}=1$. It follows, that  $\displaystyle \log{\left(\sum_{(i,j) \in E}{k w_k}\right)} = \log{(k^*)} = w_{k^*} \cdot \log{(k^*)} = \sum_{(i,j) \in E}{\log(k) w_k}$. A similar statement is true for $\displaystyle \log{\left(\sum_{n=\omega_l^{(1)}}^{\omega_u^{(1)}}{n z_n^{(1)}}\right)}$ and\  $\displaystyle \log{\left(\sum_{m=\omega^{(2)}_l}^{\omega_u^{(2)}}{m z_m^{(2)}}\right)}$. 
\par Without extra conditions on the sizes of vertex sets of a quasi-biclique and its minimum number of edges, the model has $2 \cdot \left( \vert U \vert + \vert V \vert \right) + 2 \cdot \vert E \vert$ variables.\\
\\
\stepcounter{model_count}
\textbf{\large Model \arabic{model_count}}
\begin{gather*}
2 \sum_{k=1}^{|E|}{\log{(k)}\cdot w_{k}} - \sum_{n=1}^{|U|}{\log{(n)} z_n^{(1)}} - \sum_{m=1}^{|V|}{\log{(m)} z_m^{(2)}} \xrightarrow[w,z^{(1)},z^{(2)}]{} max,  \label{eq:model_2_criterion}\\
\mbox{under conditions } \sum_{k=1}^{\vert E \vert}{w_k} \geq \gamma \sum_{n = \omega_l^{(1)}}^{\omega_u^{(1)}} \sum_{m=\omega_l^{(2)}}^{\omega_u^{(2)}}{n\cdot m\cdot z_n^{(1)}\cdot z_m^{(2)}},\label{eq:model_2_gamma_ineq}\\
\sum_{(i,j) \in E}{y_{ij}} = \sum_{k=1}^{|E|}{k\cdot w_{k}} ,\label{eq:model_2_sum_y_wk} \\
\sum_{i \in U}{u_i} = \sum_{n = \omega_l^{(1)}}^{\omega_u^{(1)}}{n z_n^{(1)}}, \sum_{j \in V}{v_j} = \sum_{m=\omega_l^{(2)}}^{\omega_u^{(2)}}{m z_m^{(2)}}, \label{eq:model_2_sums_uv_zk}\\
 \sum_{k=1}^{|E|}{w_k}=1, \sum_{n=1}^{|U|}{z_n^{(1)}}=1, \sum_{m=1}{z_m^{(2)}=1},\label{eq:model_2_sums_is1}\\
\forall i \in U: u_i \in \{0,1\} \, \forall j \in V: v_j \in \{0,1\} \, \label{eq:model_2_restrictions_uv}\\
\forall i < j, (i, j) \in E: y_{ij} \in \{0,1\} \ ,\ \forall k \in \{1,\ldots,|E|\}:  w_k \in \{0,1\} , \label{eq:model_2_restrictions_yw}\\
\forall n \in \{\omega_l^{(1)},\ldots,\omega_u^{(2)}\}: z_n^{(1)} \geq 0 , \ \forall m \in \{\omega_l^{(2)},\ldots,\omega_u^{(2)}\}: z_m^{(2)} \geq 0  \label{eq:model_2_restrictions_zk}. 
\end{gather*}
\begin{remark}
In order to simplify the model. we can add extra constraints for variables $w_k,\ k \in \{1,\ldots,\vert E \vert\}$. Let $k$ be a possible number of edges in a quasi-biclique, then:
\begin{enumerate}
\item $k \leq \omega_u^{(1)} \cdot \omega_u^{(2)}$.

\item If $\gamma \cdot \omega_l^{(1)} \cdot \omega_l^{(2)} \leq \vert E \vert \Rightarrow k \geq \gamma \cdot \omega_l^{(1)} \cdot \omega_l^{(2)}$.

\item Let us consider $U'$ such that $\vert U' \vert = \omega_l^{(1)}$ and $\displaystyle \forall u \in U'$ $deg(u) \leq \min_{x\in U\setminus U'}\{deg(x)\}$. That is $U'$ is a subset of $U$ with the minimum possible size and with all smallest degree vertices w.r.t. $U$. Then $k \geq \displaystyle\gamma \sum_{u \in U'}{deg(u)}$.

\item Similarly, for $V' \subseteq V$: $\vert V' \vert = \omega_l^{(2)}$ and $\displaystyle \forall v \in V' deg(v) \leq \min_{x\in V\setminus V'}\{deg(x)\}$, then $k \geq \displaystyle\gamma \sum_{v \in V'}{deg(v)}$.
\end{enumerate}

\end{remark}

\section{Datasets}\label{sec:data}
Datasets for testing the performance of the algorithms are mainly taken from \citep{Borgatti:2014,Batagelj:2014}.
\begin{enumerate}

\item Southern Women: $|U|=18, |V|=14$, $|E|=89$ edges, a classic ethnographic dataset with a bipartite graph of 18 women, which met in a series of 14 informal social events~\citep{Freeman:2003}.
\item Divorce in the US: $|U|=9, |V|=50$ vertices, and $|E|=225$ edges. This graph describes the particular causes of divorce in the United States.
\item Dutch Elite: $|U|=3810, |V|=937$ vertices,  and $|E|=5221$ edges. This graph describes the connections between people and the most important for the Netherlands government administrative authorities.
\item Dutch Elite (TOP-200): $|U|=200, |V|=395$ vertices,  and $|E|=877$ edges. The list of people in the first partition of the graph consists of the most influential persons regarding their membership in administrative authorities.
\item Movie-Lens (ml-latest-small): $|V|=99125, |V|=50$ vertices,  and $|E|=20340$ edges, a bipartite graph of ``movie-genre'' relation from Movie Lens project~\citep{Harper:2015}.
\end{enumerate}

\section{Experimental Verification}\label{sec:experiments}
\subsection{Implementation description}
The greedy algorithm of searching for maximal $\gamma$-quasi-biclique in a bipartite graph was implemented in Python 2.7. The MIP models were implemented with the optimization package CPLEX, created by IBM. 
All computations were carried out on a laptop with macOS operating system, 2.7 GHz Intel Core i5 processor, and RAM 8 GB 1867 MHz.

The search for solutions in the CPLEX package was performed by means of the branch-and-cut method, which is similar to the  branch-and-bound algorithmic approach. The method uses a search tree, where each node represents a subproblem that needs to be solved and possibly analysed further.

The \textsc{branch} procedure creates two new nodes from the active parent node. Generally, at this point, the boundaries of one variable are applied and stored for the current node and all its child nodes. In its turn, the \textsc{cut} procedure adds a new constraint to the model. As a result of any cut, the solution space for the subproblems, which are presented in the nodes, is reduced, and the number of branches needed to process decreases. CPLEX processes active nodes in the tree until no more active nodes are available or a certain limit is reached \footnote{CPLEX user manual: https://www.ibm.com/support/knowledgecenter/SSSA5P\_ 12.8.0/ilog.odms.cplex.help/CPLEX/homepages/usrmancplex.html }.

The standard solution with the CPLEX software package assumes only one of the optimal solutions as the answer. However, in CPLEX it is possible to obtain a set of optimal solutions using the \textit{solution pool} method, which allows one to find and store several solutions of MIP models.

The generation of multiple solutions works in two steps. The first step is identical to the usual solution search using the CPLEX software package. At this step, the algorithm finds the only optimal solution of the integer programming problem. It also saves nodes in the search tree that could potentially be useful; for example, if not all the variable constraints are taken into account or if all the nodes contain a suitable value, but the target function is not optimal.

In the second step, using previously calculated and stored  information in the first stage, several solutions are generated, and the tree is traversed again, in particular within the branches rooted from the additional nodes stored in the first stage.

\subsection{Illustrative examples}
On a toy example of a graph with 12 vertices, we consider the search results for maximal $\gamma$-quasi-bicliques, $\gamma=0.8$, using Models 1 and 2, respectively (Fig.~\ref{img:models_simple_example}).

\begin{figure}[h!]
 \begin{center}
 \begin{minipage}{\linewidth}
    \center \includegraphics[width = \textwidth]{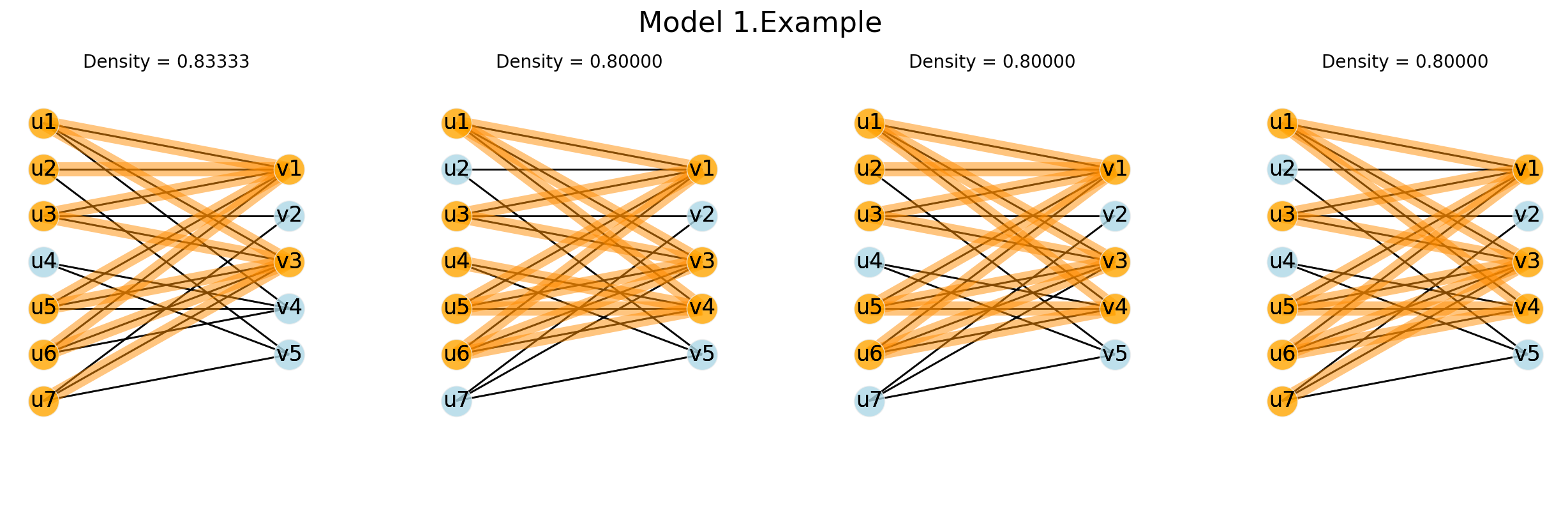}
    \end{minipage}
 \vfill
 \begin{minipage}{\linewidth}
  \center \includegraphics[width = \textwidth]{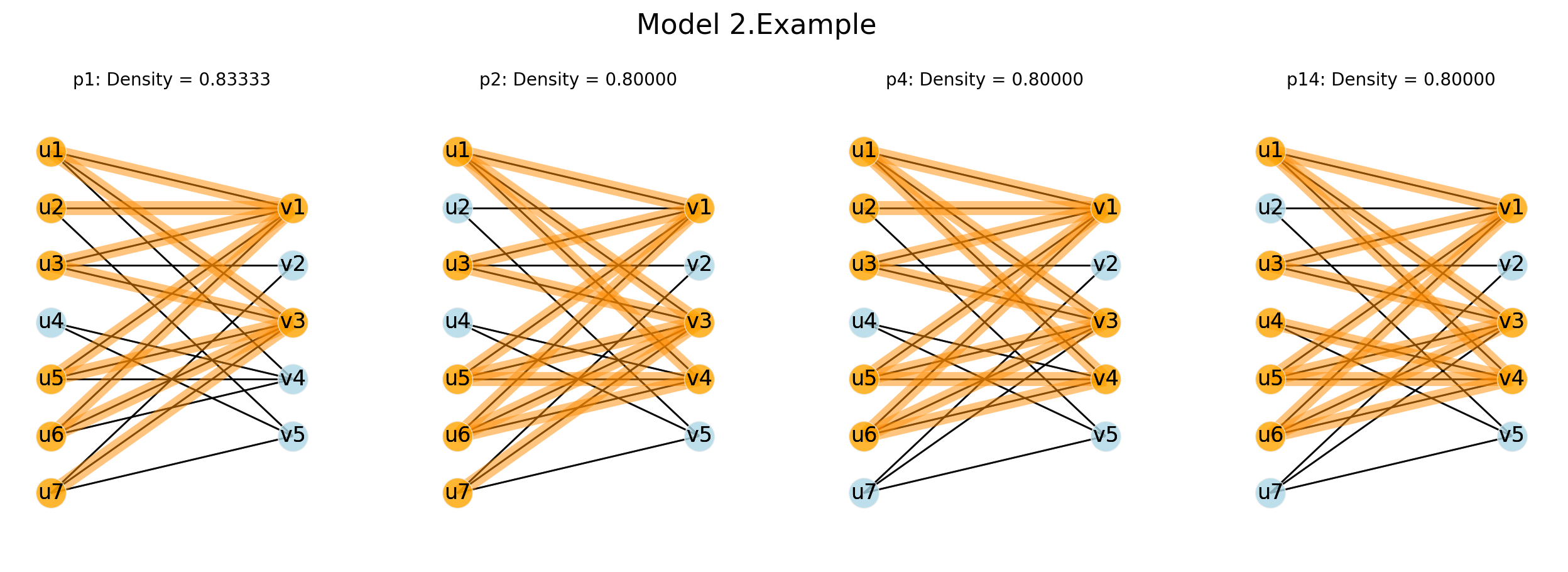}
 \end{minipage}
  \caption{The results of search for quasi-bicliques using Models 1 and 2.}
   \label{img:models_simple_example}
 \end{center}
\end{figure}

The results for both models are the same (w.r.t. to the solutions output order). Even for this small-sized problem, the time is tangible: the computations with Model 1 took 2.16 s, and for Model 2, it was 2.94 s. A comparison of the executed models and the greedy algorithm in terms of computational time is given below for the selected bipartite graphs.

\subsection{Comparison of algorithms }
The algorithm of searching for the maximal quasi-biclique using the CPLEX software package was implemented for Models 1 and 2 (see Section~\ref{sec:models}) and compared with the \textsc{greedy} algorithm from \citep{1_quasi_complexity} (let us denote it as Greedy Algorithm). 

There are no comparison results presented for the model {\bf F3} \citep{exact_mip_veremryev}: despite its fast work, the algorithm based on this model chose quasi-bicliques of very small size and maximum density (i.e. bicliques). This phenomenon is rather expected since the model {\bf F3} implies a completely different function of the density of the subgraph. Therefore, the comparison, in this case, is irrelevant. The description of Complete QB in~ \citep{3_financial_ratios} lacks of important implementation details.

The weakness of the constructed MIP models was identified during the finding solution. Since the problem of enumerating all maximal quasi-bicliques in practice requires considerable time, the software package can discard some solutions, if it has found quite a few optimal ones already. First of all, the search is carried out among unbalanced quasi-bicliques (no constraints on the approximately equal size of the quasi-clique partitions have been given). For large graphs, this means that the number of vertices in one of the parts of the found optimal solution may exceed the number of vertices in the second part by hundreds of times or more.

This issue can be addressed in two ways. Firstly, one can set roughly equal limits on the size of the partitions. Secondly, it is possible to adapt the model for finding an almost balanced quasi-bicliques, that means that sizes of partitions of a quasi-biclique differ by $\theta$. To do this, the following conditions should be added to Model 1 or Model 2:

\begin{equation}
\sum_{n=\omega_l^{(1)}}^{\omega_u^{(1)}}{z_n^{(1)}} \geq (1~-~\theta)~\sum_{m=\omega^{(2)}_l}^{\omega_u^{(2)}}{z_m^{(2)}}, \label{eq:balanced_condition_lower}
\end{equation}
\begin{equation}
\sum_{n=\omega_l^{(1)}}^{\omega_u^{(1)}}{z_n^{(1)}} \leq (1~+~\theta)~\sum_{m=\omega^{(2)}_l}^{\omega_u^{(2)}}{z_m^{(2)}}\label{eq:balanced_condition_upper}.
\end{equation}

Models with additional conditions \ref{eq:balanced_condition_lower} and \ref{eq:balanced_condition_upper} have not been tested.

It has also been noted that small-sized quasi-bicliques can be useless in practice, but their recalculation is costly. Therefore, for each data set, we can establish minimum bounds on the size of a quasi-biclique (of the order of the smallest vertex degree with respect to the graph partitions).

The results the algorithms execution are presented in Table~\ref{tbl:results_06} for $\gamma=0.6$, Table~\ref{tbl:results_07} for $\gamma=0.7$  and in Table~\ref{tbl:results_08} for $\gamma=0.8$\footnote{The size column in Table~\ref{tbl:results_08} shown as the result of summation $|U'|$ and $|V'|$}. For each algorithm its main parameters are indicated: the algorithm running time (time), the number of found maximum quasi-bicliques (count) and the maximum size of the found solution.

%\begin{landscape}
 \begin{table}[h!]
 \caption{Results of maximum $\gamma$-quasi-biclique search. Parameters: $\gamma = 0.6$.}
 \label{tbl:results_06}
 \begin{center}
%  \begin{tabular}{|c||*{3}{c|}|*{3}{c|}|*{3}{c|}}
%  \hline
%  \multirow{2}*{Data} & \multicolumn{3}{c||}{\textbf{\small Model 1}} & \multicolumn{3}{c||}{\textbf{\small Model 2}} &  \multicolumn{3}{c|}{\textbf{\small Greedy algorithm}}  \\ \cline{2-10}
%   & \small{time} & \small{count} & \small{size} &\small{time} & \small{count} & \small{size} & \small{time} & \small{count} & \small{size}  \\ \hline 
%   \small{Divorse in US} & 1.23 s & 1 & 54 & 3.38 s & 1 & 53 & 360ms & 1 & (37, 1)  \\ \hline 
%   \small{DutchElite (top200)} & 7602 s & 2 & (26,1) & 181 s & 1 & 14 & 3 s & 1 & 13  \\ \hline 
%   \small{DutchElite} & - & - & - & 3265 s & 1 & 22 & 1954 s& 1 & 21  \\ \hline 
%   \small{Movie-Lens} (small) & 28068 & 2 & 694 & 5042 s & 1 & 712 & 2591 s& 1 & 681 \\ \hline 
%  \end{tabular}
 \begin{tabular}{|c||*{3}{c|}|*{3}{c|}|*{3}{c|}}
  \hline
  \multirow{2}*{Data} & \multicolumn{3}{c||}{\textbf{\small Model 1}} & \multicolumn{3}{c||}{\textbf{\small Model 2}} &  \multicolumn{3}{c|}{\textbf{\small Greedy Algorithm} }  \\ \cline{2-10}
   & \small{time} & \small{count} & \small{size} &\small{time} & \small{count} & \small{size} & \small{time} & \small{count} & \small{size}  \\ \hline 
   \small{Southern} & \multirow{2}*{678 ms}& \multirow{2}*{4} & \multirow{2}*{(18,4)} & \multirow{2}*{801 ms} & \multirow{2}*{2} & \multirow{2}*{(18,4)} & \multirow{2}*{234 ms} & \multirow{2}*{4} & \multirow{2}*{(17, 5)}  \\ 
   \small{Women} & & & & & & & & & \\ \hline 
   \small{Divorse} & \multirow{2}*{1.23 s}& \multirow{2}*{1} & \multirow{2}*{(4,50)} & \multirow{2}*{3.38 s} & \multirow{2}*{1} & \multirow{2}*{(4,50)} & \multirow{2}*{360 ms} & \multirow{2}*{1} & \multirow{2}*{(2, 46)}  \\ 
   \small{in US} & & & & & & & & & \\ \hline 
   \small{DutchElite} & \multirow{2}*{7602 s} & \multirow{2}*{2} & \multirow{2}*{(26,1)} & \multirow{2}*{181 s} & \multirow{2}*{1} & \multirow{2}*{(11,3)} & \multirow{2}*{3 s} & \multirow{2}*{1} & \multirow{2}*{(10,3)}  \\
   \small{(top200)} & & & & & & & & & \\ \hline  
   \small{DutchElite} & - & - & - & 6968 s & 1 & (45,2) & 1954 s& 1 & (40,2)  \\ \hline 
   \small{Movie-Lens} & \multirow{2}*{28068 s} & \multirow{2}*{2} & \multirow{2}*{(692,2)} & \multirow{2}*{13851 s} & \multirow{2}*{5} & \multirow{2}*{(900,3)} & \multirow{2}*{5976 s} & \multirow{2}*{2} & \multirow{2}*{(754,2)} \\ 
   \small{(small)} & & & & & & & & & \\ \hline  
  \end{tabular}
 \end{center} 
\end{table}

 \begin{table}[h!]
 \caption{The results of maximum $\gamma$-quasi-biclique search for $\gamma = 0.7$.}

 \label{tbl:results_07}
 \begin{center}
%  \begin{tabular}{|c||*{3}{c|}|*{3}{c|}|*{3}{c|}}
%  \hline
%  \multirow{2}*{Data} & \multicolumn{3}{c||}{\textbf{Model 1}} & \multicolumn{3}{c||}{\textbf{Model 2}} &  \multicolumn{3}{c|}{\textbf{Greedy algorithm}}  \\ \cline{2-10}
%    & \small{time} & \small{count} & \small{size} &\small{time} & \small{count} & \small{size} & \small{time} & \small{count} & \small{size}  \\ \hline 
%   \small{Divorse in US} & 1.56 s & 1 & 47 & 2.66 s & 1 & 41 & 320 ms & 1 & 30  \\ \hline 
%   \small{DutchElite (top200)} & 8497 & 1 & (23,1) & 1668 & 3 & 13 & 1.63 s & 1 & 13  \\ \hline 
%   \small{DutchElite} & - & - & - & 4843 s & 1 & 22 & 1511 s & 1 & 21  \\ \hline 
%   \small{Movie-Lens (small)} & - & - & - & 5116 s & 1 & (698, 2) & - & - & - \\ \hline 
%  \end{tabular}
 \begin{tabular}{|c||*{3}{c|}|*{3}{c|}|*{3}{c|}}
  \hline
  \multirow{2}*{Data} & \multicolumn{3}{c||}{\textbf{Model 1}} & \multicolumn{3}{c||}{\textbf{Model 2}} &  \multicolumn{3}{c|}{\textbf{\small Greedy' algorithm} }  \\ \cline{2-10}
    & \small{time} & \small{count} & \small{size} &\small{time} & \small{count} & \small{size} & \small{time} & \small{count} & \small{size}  \\ \hline 
   \small{Southern} & \multirow{2}*{1.29 s}& \multirow{2}*{1} & \multirow{2}*{(16,3)} & \multirow{2}*{1.11 s} & \multirow{2}*{1} & \multirow{2}*{(10,6)} & \multirow{2}*{309 ms} & \multirow{2}*{1} & \multirow{2}*{(16, 2)}  \\ 
   \small{Women} & & & & & & & & & \\ \hline 
   \small{Divorse} & \multirow{2}*{1.56 s} & \multirow{2}*{1} & \multirow{2}*{(2,45)} & \multirow{2}*{2.66 s} & \multirow{2}*{3} & \multirow{2}*{(5,36)} & \multirow{2}*{320 ms} & \multirow{2}*{1} & \multirow{2}*{(2,28)} \\
   \small{in US} & & & & & & & & & \\ \hline 
   \small{DutchElite} & \multirow{2}*{8497 s} & \multirow{2}*{1} & \multirow{2}*{(23,1)} & \multirow{2}*{1668 s} & \multirow{2}*{3} & \multirow{2}*{(10,3)} & \multirow{2}*{1.63 s} & \multirow{2}*{1} & \multirow{2}*{(10,3)}  \\
   \small{(top200)} & & & & & & & & & \\ \hline 
   \small{DutchElite} & - & - & - & 6166 s & 1 & (20,2) & 1511 s & 1 & (20,1)  \\ \hline 
   \small{Movie-Lens} & \multirow{2}*{-} & \multirow{2}*{-} & \multirow{2}*{-} & \multirow{2}*{10719 s} & \multirow{2}*{6} & \multirow{2}*{(800, 3)} & \multirow{2}*{-} & \multirow{2}*{-} & \multirow{2}*{-} \\ 
   \small{(small)} & & & & & & & & & \\ \hline 
  \end{tabular}
 \end{center} 
\end{table}

Two found $\gamma$-quasi bicliques for the dataset divorce in the US are shown in Fig.~\ref{MIP:divorce}.

\begin{figure}[h!]
 \begin{center}
 \begin{minipage}{0.49\linewidth}
 \begin{center}
    \includegraphics[width = 1\textwidth]{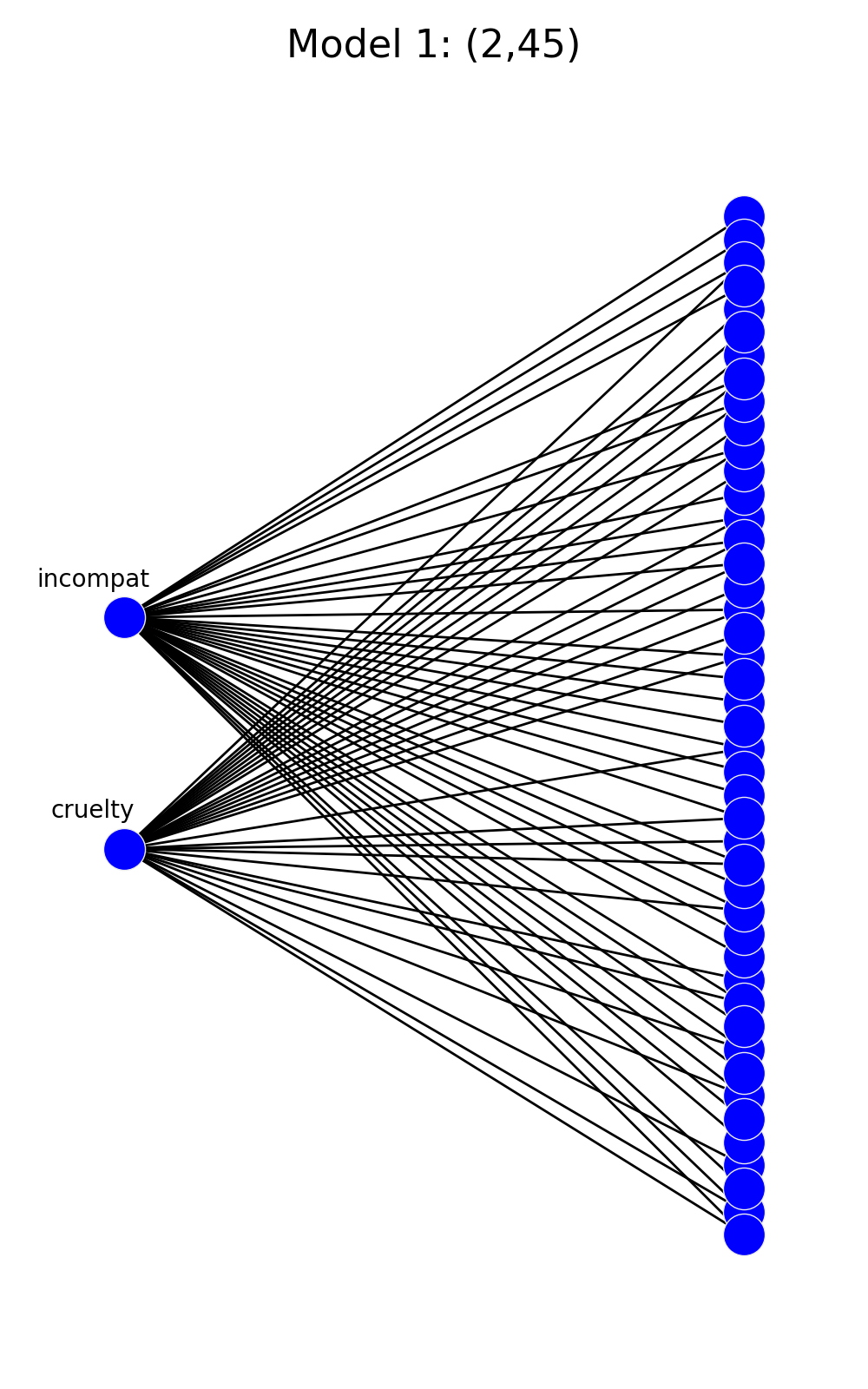}
%   \caption{ BKR-test: Остатки линейной регресии $\tilde{y}_k$ в $\xi_k = k \cdot \Delta t + \varepsilon_k$, $\varepsilon_k \sim \mathcal{N}(0, 0.25)$}
\label{img:model1_divorce}
   \end{center}    
  \end{minipage}  
   \hfill
  \begin{minipage}[h]{0.49\linewidth}
 	 \includegraphics[width = 1\textwidth]{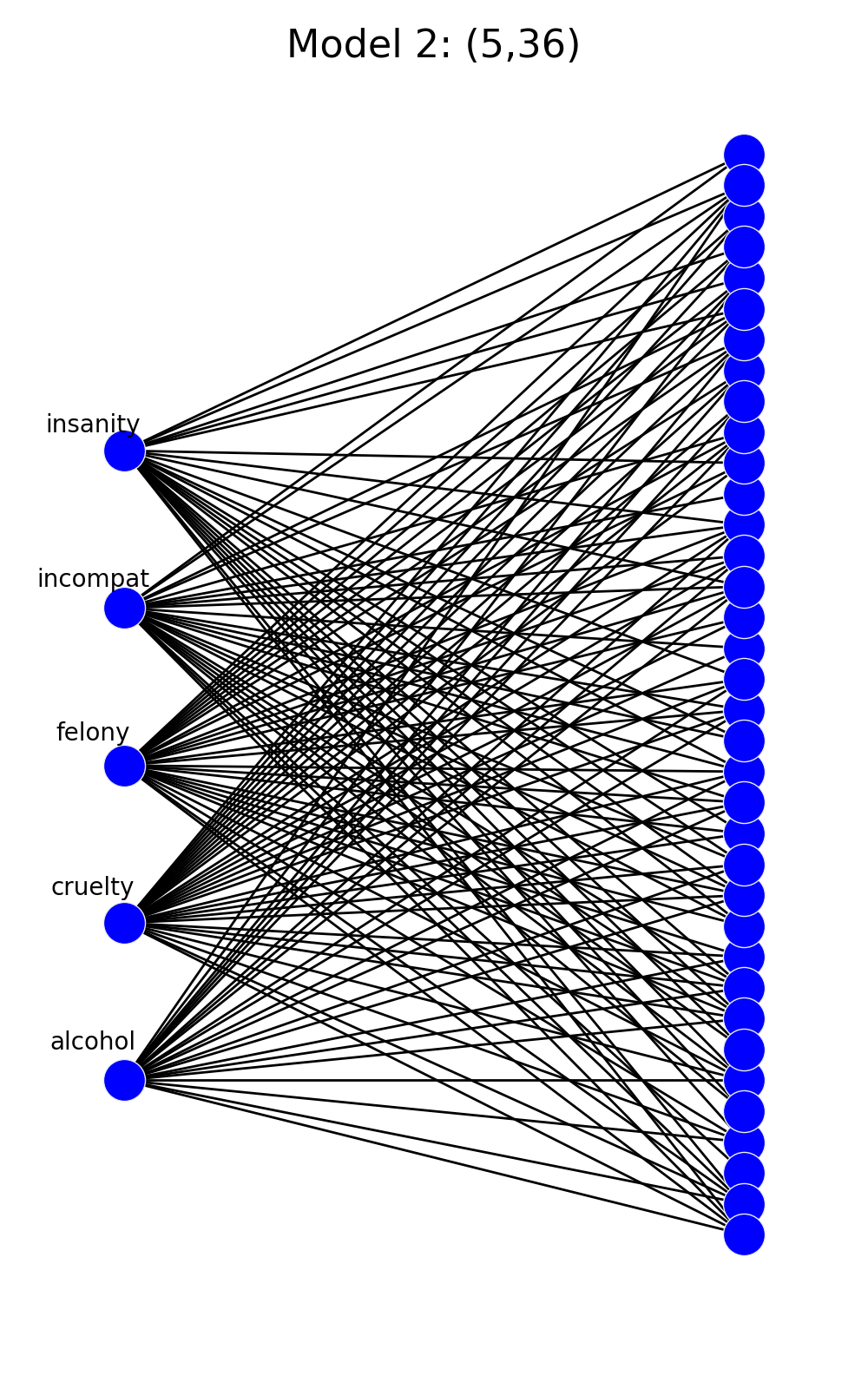}
%    \caption{BKR-test: Остатки линейной регресии $\tilde{y}_k$ для неравномерных наблюдений.} 
\label{img:model2_divorce}    
  \end{minipage}  
 \end{center}
 \caption{Quasi-bicliques obtained by the studied MIP models for the dataset Divorce in US with $\gamma=0.7$.}\label{MIP:divorce}
\end{figure}

 \begin{table}[h!]
 \caption{The results of maximum $\gamma$-quasi-biclique search for  $\gamma = 0.8$.}

 \label{tbl:results_08}
 \begin{center}
  \begin{tabular}{|c||*{3}{c|}|*{3}{c|}|*{3}{c|}}
  \hline
  \multirow{2}*{Data} & \multicolumn{3}{c||}{\textbf{Model 1}} & \multicolumn{3}{c||}{\textbf{Model  2}} &  \multicolumn{3}{c|}{\textbf{Greedy algorithm} }\\ \cline{2-10} % & \multicolumn{3}{c|}{\textbf{CompleteQB} \cite{3_financial_ratios}} 
    & \small{time} & \small{count} & \small{size} &\small{time} & \small{count} & \small{size} & \small{time} & \small{count} & \small{size}   \\ \hline 
   \small{Divorce in US} & 8.53 s & 1 & 38 & 1.7 s & 2 & 33 & 313 ms & 1 & 25  \\ \hline 
   \small{DutchElite (top200)} & - & - & - & 4834 s & 2 & 13 &  2.5 s & 1 & 13  \\ \hline 
   \small{DutchElite} & - & - & - & 7129 s & 1 & 47 & 1718 & 1 & 21  \\ \hline 
   \small{Movie-Lens} (small) & - & - & - & 9046 s & 2 & 445 & - & - & - \\ \hline 
  \end{tabular}
 \end{center} 
\end{table}
%\end{landscape}

Dashes ("-") in the following tables mean that the algorithm worked 10 hours and did not find a solution. If one of the partitions of the maximal quasi-biclique has a unit size, this is marked in the table as $(U', V')$, where $U'$ and $V'$  are the sizes of the partitions.

\section{Results and conclusions}\label{sec:results_concl}
One can note, that mixed linear programming models work an order of magnitude slower than the greedy algorithm by \cite{1_quasi_complexity}, but they find more quasi-cliques and generally each of them has a larger size.

For small graphs, the time for finding the solution by the considered models is acceptable. Model 1 contains a fewer number of variables that must be optimized, but its maximisation criterion is costly for large graphs. Thus, on large-sized graphs Model 1 works too long (more than 10 hours), especially for high $\gamma$ density thresholds. The dependence of the speed and quality of processing on $\gamma$ is also apparent for two other algorithms: for high thresholds on density, those methods work longer since the number of possible optimal solutions to the problem is reduced.
Model 2 on similar graphs showed better results, but the processing time is still quite large. For $Dutch Elite$ data with a large number of vertices and a small number of edges, MIP-based algorithms work much longer than on more dense graphs.

If we consider the results, not in terms of speed, but terms of quality, then Model 2 was the best one. This model produced more unique and larger quasi-bicliques than other algorithms. %(however, only sizes of the maximum quasi-bicliques are mentioned in the tables).

The following ways of future work seems to be relevant: 1) further improvements of the proposed models by establishing tighter bounds for different constraints and using optimization tricks; 2) exploration of new optimization criteria; 3) comparison of different MIP solvers with the state-of-the-art approaches of searching for quasi-bicliques in a larger set of experiments.

Another interesting avenue for research could be a study on connection between various approximations of formal concepts (fault-tolerant concepts~\citep{Besson:2006} and object-attribute biclusters~\citep{Ignatov:2012,Ignatov:2017}), Boolean matrix factorization~\citep{Miettinen:2013,Belohlavek:2019} and quasi-bicliques.

\begin{acknowledgement}
The work of  Dmitry I. Ignatov  shown  in  all the sections, except 5 and 6, has been supported by the Russian Science Foundation grant no. 17-11-01276 and performed at St. Petersburg Department of Steklov Mathematical Institute of Russian Academy of Sciences, Russia. The authors would like to thank Boris Mirkin, Vladimir Kalyagin, Panos Pardalos, and Oleg Prokopyev for their piece of advice and inspirational discussions. Last but not least we are thankful to anonymous reviewers for their useful  feedback.
\end{acknowledgement}

\bibliographystyle{spbasic} 
\bibliography{mipqbic} 

\end{document}